\begin{document}

\markboth{P. Cirillo, J. H\"usler and P. Muliere }
{A nonparametric urn-based approach to interacting failing systems}

\catchline{}{}{}{}{}

\title{A NONPARAMETRIC URN-BASED APPROACH \\TO INTERACTING FAILING SYSTEMS\\ WITH AN APPLICATION TO CREDIT RISK MODELING}

\author{PASQUALE CIRILLO\footnote{Corresponding author: pasquale.cirillo@stat.unibe.ch, tel. +41 (0)31 63 18 803.}} 
\author{J\"URG H\"USLER}
\address{Institute of Mathematical Statistics and Actuarial Sciences, University of Bern \\
Sidlerstrasse 5, Bern, CH-3012, Switzerland}
\author{PIETRO MULIERE}
\address{Department of Decision Sciences, Bocconi University \\
Via R\"ontgen 1, Milan, IT-20136, Italy}

\maketitle

\begin{history}
\received{($26^{th}$ May 2009)}
\revised{($17^{th}$ December 2009)}
\end{history}

\begin{abstract}
In this paper we propose a new nonparametric approach to interacting failing systems (FS), that is
systems whose probability of failure is not negligible in a fixed time
horizon, a typical example being firms and financial bonds.\\
The main purpose when studying a FS is to calculate the probability of
default and the distribution of the number of failures that may occur during
the observation period. A model used to study a failing system is defined 
\textit{default model}.\\
In particular, we present a general recursive model constructed by the means
of interacting urns.\\
After introducing the theoretical model and its properties we show a first application to credit risk modeling, showing how to assess the idiosyncratic probability of default of an obligor and the joint probability of failure of a set of obligors in a portfolio of risks, that are divided into reliability classes.
\end{abstract}

\keywords{Failing system; Urn model; Neutral to the right processes; Credit risk; Firms' defaults.}

\section{Introduction}\label{section1}

A failing system (FS) is a system whose probability of failure is not
negligible in a fixed time horizon (for example one year). The interest for
such a topic is due to its diffusion in real-life problems: financial
portfolios and credit risk, electrical and mechanical systems, firms' defaults, the world wide web can all be considered failing systems.\\
The main purpose when studying a FS is to calculate the probability of
default and the distribution of the number of failures that may occur during
the observation period. A model that studies failing systems is defined 
\textit{default model}.\\
Here we present new ways for calculating the probabilities of joint
defaults in $k$ different homogeneous groups of FS, when each group is
characterized by some sort of external information about
its reliability, which allows for an ordering. In particular, we propose a recursive model constructed by
the means of interacting urns.\\
For simplicity, we assume that the probability of default is homogeneous within groups. Moreover we hypothesize that this probability of default is given by the sum of two components:
\begin{enumerate}
\item a group idiosyncratic probability of failing that reveals, on the average, how a FS belonging to a given group is likely to fail "on its own";
\item a systemic probability of failing, which represents the amount of negative/positive interactions among failing systems in different groups. In particular, once we have ordered the groups of FS from the best to the worst one, we will assume that the systemic probability of failing of group $i$ increases (decreases) if the number of defaults in the superior groups $1,...,i-1$ increases (decreases), while it remains the same for any change in the inferior groups $i+1,...,k$.\\
\end{enumerate}
Hence, the aim of this paper is to model the dependence among failures both
within and between the $k$ groups. This scheme can be efficiently reproduced with urns.\\
Urn processes (or urn models or urn schemes) constitute a very large family
of probabilistic models in which the probability of certain events is
represented in terms of sampling, replacing and adding balls in one or more
urns or boxes.\\
Urn problems have been an important part of the theory of probability since
the publication of the posthumous \textit{Ars conjectandi} by Jakob Bernoulli \cite{Bernoulli} in 1713. Their most interesting characteristic is the possibility
of simplifying complex probabilistic ideas, making them intuitive and
concrete, and yet guaranteeing a good level of abstraction, that allows for
general results.\\
The choice of urn processes as a probabilistic tool is mainly due to the
following reasons:

\begin{enumerate}
\item They are particularly suitable, thanks to their efficiency, for
studying chance experiments, especially when these are characterized by
countable spaces;
\item They represent an excellent way to describe the concept of
\textquotedblleft random choice\textquotedblright ;
\item Simple urns can be easily compounded into new ones in order to study
more complex problems;
\item Urn schemes have as powerful as elegant combinatorial properties, that
allow for general, complex results in a rather concise form.
\item There are many relationships and isomorphisms between urn models and
other well-known mathematical objects (see for example \textit{analytic urns} in \cite{Flajolet}). All this gives the possibility to the
researcher of switching from one approach to the other at her/his
convenience;
\item Urns are very useful objects in simulations, given their natural
connections with sampling schemes.
\end{enumerate}

As discussed in \cite{Bala} and \cite{Johnson}, one of the prototypes of urn processes
is the well-known Polya urn, developed at the beginning of the last century
to model the diffusion of infectious diseases. It represents one of the
simplest ways to generate beta exchangeable random variables and it is based
on the concept of reinforcement. Moreover, its multidimensional version, as
shown in \cite{Black}, is a very useful tool to obtain the fundamental
Dirichlet distribution, an essential tool for Bayesian statistics.\\
Polya urns represent one of the basic pillars of the present work. We will make use of Polya urns to
define our urn chain model.\\
The analysis of default models via urns has several advantages:

\begin{enumerate}
\item The modelization is rather intuitive and immediate;
\item Urns can be considered a first attempt to study failing systems from a
Bayesian nonparametric point of view, for they allow the researcher to
introduce her/his prior knowledge into the analysis by modifying the initial
urn composition and the reinforcement matrix of the urn process (see \cite{Cifa} for more details)
\item The flexibility of urn schemes is a very useful
characteristic for simulations and empirical studies, as we will see in the
last section of the paper.
\end{enumerate}
Some seminal ideas for our construction have been introduced in Marsili and Valleriani \cite{Mava}, where systems of interacting Polya urns are discussed.
Anyway, the main references for our model, which is based on
an iterative framework of interacting urns, are \cite{Doksum} and \cite{Mulwal}.
In particular, the basic ideas in \cite{Mulwal}, further developed in \cite{Cirillo}, can be considered the very
starting point of this work.\\
The general framework we propose in this paper has, according to us, several interesting applications; a useful one being related to credit risk modeling.\\
Credit risk is the risk of loss due to a debtor's non-payment of a loan or other line of credit and it is strictly linked to the concept of default. One of the most important issues in credit risk modeling is represented by the assessment of the probability of failure/default of an obligor and/or a set of obligors in a portfolio of risks. In particular, especially for banks and financial companies, it is important to estimate the probability of joint defaults over a fixed time horizon, and this is why default models represent a fundamental tool in credit risk analysis. Furthermore, the assessment of the probability of joint defaults, as pointed out in \cite{Schoen}, is also important for securities whose payoff is function of the profits and losses of a portfolio of underlying bonds. It follows that the study of dependence structure of interacting failing systems such as firms and bonds is fundamental for a correct estimation of credit risk, especially when the dependence cannot be summarized by simple measures of co-variability like linear correlation, despite this over-simplification is often used in credit risk modeling (see for example, \cite{McNeil}).\\
The paper is organized as follows: Section 2 introduces our model and the main probabilistic results; Section 3 propose a first study concerning credit risk modeling; and Section 4 concludes.

\section{Introducing the urn chain model}
We can now introduce the urn chain model in which several urns interact to reproduce dependent risks, the basic brick of our construction being represented by Polya urn.\\
Polya urn has been introduced by Polya and Eggenberger in 1923 \cite{Polya} to study infectious and self-reinforcing phenomena. The most important characteristic of this particular urn scheme is its reinforcement mechanism, that has become the prototype for many probabilistic models for studying contagion and aftereffects. The behavior of Polya urn is very simple yet ingenious. In its simplest two-color version, imagine we have an urn containing balls of two different colors (say black and white). Every time we sample the urn we look at the color of the chosen ball and then put it back into the urn together with another ball of the same color. In this way, the more a given color has been sampled in the past, the more likely it will be sampled in the future. Obviously the reinforcement rule can be generalized introducing $s$ balls of the same colors, considering a random or time-varying reinforcement and so on. For a complete analysis of Polya urn's combinatorics, behavior  and generalizations see \cite{Johnson} and \cite{Mahmoud}.\\
In what follows, we will make use of simple two-color Polya urns with general (but fixed) reinforcement $s$. Our choice of Polya urns to construct a risk model - with possible applications to credit risk - has been inspired by several works available in the literature, in which urns are used to model credit default distributions \cite{Amerio}, allocation problems under uncertainty \cite{Arthur}, actuarial problems \cite{Daykin}, firms defaults \cite{Cihuesler}, risk and ambiguity \cite{Ghosh}, just to cite some papers.

\subsection{The idiosyncratic probability of default}\label{subsection2}

Consider $N$ failing systems (think of firms or bonds) divided into $k$ groups that, for simplicity, we assume to be homogeneous. We also hypothesize that, within each group, the FS are exchangeable in the sense of de Finetti \cite{Aldous}, that is to say that their joint probability is immune to permutations. This assumption is clearly weaker than the one of independence and identical distribution.\\
Each group consists of $n_{j}$ elements, such that $\sum_{j=1}^{k}n_{j}=N$. \\
Assume that every group is characterized by some sort of external information (qualitative or quantitative) about its reliability, i.e. about the reliability of its components. In other words we ask every group to possess a score $\gamma _{j}$, $j=1,...,k$, such that the set $G=\{\gamma
_{j}:j=1,...,k\}$ is a poset. This means that there exists a relation $\precsim\mathcal{\mathcal{}}$ on $G$ which is reflexive, antisymmetric and transitive or, formally,
\begin{gather*}
\gamma _{r}\precsim \gamma _{r} \\
\text{if }\gamma _{r}\precsim \gamma _{s}\text{ and }\gamma _{s}\precsim
\gamma _{r}\text{ then }\gamma _{r}=\gamma _{s} \\
\text{if }\gamma _{r}\precsim \gamma _{s}\text{ and }\gamma _{s}\precsim
\gamma _{t}\text{ then }\gamma _{r}\precsim \gamma _{t}.
\end{gather*}
Without any loss of generality, we will assume that the $k$ group are
completely ordered according to their ratings and, specifically, $\gamma
_{1}\succsim \gamma _{2}\succsim ...\succsim \gamma _{k-1}\succsim \gamma
_{k}$. We will read the relation $\succsim $ as ``better than" so, for
example, $\gamma _{1}\succsim \gamma _{2}$ means that group $1$ is ``better
than" group $2$ and, as a consequence of this, the $k-$th group is the worst
one, since it is characterized by the lowest reliability. In particular, as
signal of reliability we consider the idiosyncratic probability of default $D_{i}$ of the
different groups. In general - and this is neither a strong nor a ludicrous
assumption - we want that $D_{i}<D_{i+1}$ for $i=1,...,k$.\\
As far as the idiosyncratic probability of default of every group is concerned, we want to construct a mechanism that updates that probability every time a failure occurs in a given group. As said, our idea is to use Polya urns (see \cite{Johnson}) that are characterized by a simple but efficient reinforcement rule.\\
So, let $D_i(t)$ represent the idiosyncratic probability of default of groups $i$ at time $t$.\\
Assume that every group is associated with a Polya urn $U_i$, $i=1,...,k$ initially containing $w_i(0)\geq0$ white balls and $b_i(0)\geq0$ black balls. At time $t=1,2,...$, we sample a ball from urn $U_i$, look at its color and return it into the urn together with $s_i>0$ additional balls of the same color. This mechanism, called reinforcement, evidently modifies the composition of the urn, updating the probability of picking a certain color. If we repeat the sampling infinite times, we obtain an infinite sequence of $0-1$ Bernoulli random variables $\{X_i(t)\}$, where $X_i(t)=0$ if the sampled ball at time $t$ is black and $X_i(t)=1$ if white. The sequence $\{X_i(t)\}$ is exchangeable and it is called Polya sequence with parameters $(w_i(0),b_i(0),s_i)$.\\
For $t\geq0$ let $W_i(t)$ and $B_i(t)$ represent the number of white and black balls in urn $U_i$ at time $t$. It is easy to verify that
\begin{equation}
X_i(1)\sim Bern \left(\frac{w_i(0)}{w_i(0)+b_i(0)}\right)
\end{equation}
and, in general,
\begin{equation}
X_i(t+1)\sim Bern \left(\frac{W_i(t)}{W_i(t)+B_i(t)}\right),
\end{equation}
where $Bern(\eta)$ is the Bernoulli distribution of parameter $\eta$.\\
As far as the composition of the urn, the evolution rule is simply given by
\begin{equation}
(W_i(t+1),B_i(t+1))=
\begin{cases}
      (W_i(t)+s_i,B_i(t)) &\text{with probability $\frac{W_i(t)}{W_i(t)+B_i(t)}$}, \\
      (W_i(t),B_i(t)+s_i) &\text{with probability $\frac{B_i(t)}{W_i(t)+B_i(t)}$}.
\end{cases}
\end{equation}
\begin{proposition}\label{polyaseq}\\
 Let $\{X_i(t)\}$ be a Polya sequence with parameters $(w_i(0),b_i(0),s_i)$. Then: 
 \begin{enumerate}
 \item $\{X_i(t)\}$ is exchangeable and its de Finetti measure is a $Beta\left(\frac{w_i(0)}{s_i},\frac{b_i(0)}{s_i}\right)$;
 \item the proportion of white balls $Z_i(t)=\frac{W_i(t)}{W_i(t)+B_i(t)}$ converges with probability one to $p_i$.
 \end{enumerate}
\end{proposition}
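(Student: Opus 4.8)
The plan is to treat the two claims separately, exploiting the explicit combinatorial structure of the Polya sequence for the first and a martingale argument for the second.

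For part (1) I would first compute the joint law of a finite string $X_i(1),\dots,X_i(n)$ directly from the sequential sampling rule. Writing $w=w_i(0)$, $b=b_i(0)$, $s=s_i$ and letting $h$ be the number of whites drawn, each white draw contributes a factor $(w+js)/(w+b+ms)$ and each black a factor $(b+\ell s)/(w+b+ms)$, where $j,\ell,m$ are the running counts. The key observation is that, regardless of the order in which the $h$ whites and $n-h$ blacks appear, the numerator is always $\prod_{j=0}^{h-1}(w+js)\prod_{\ell=0}^{n-h-1}(b+\ell s)$ and the denominator always $\prod_{m=0}^{n-1}(w+b+ms)$; since this depends only on $h$ and $n$, exchangeability follows immediately. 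To identify the de Finetti measure I would factor $s$ out of every term and rewrite the products as ratios of Gamma functions, obtaining
\[
P(X_i(1)=x_1,\dots,X_i(n)=x_n)=\frac{\Gamma\!\left(\frac{w+b}{s}\right)}{\Gamma\!\left(\frac{w}{s}\right)\Gamma\!\left(\frac{b}{s}\right)}\cdot\frac{\Gamma\!\left(\frac{w}{s}+h\right)\Gamma\!\left(\frac{b}{s}+n-h\right)}{\Gamma\!\left(\frac{w+b}{s}+n\right)}.
\]
Recognizing the right-hand side as $\int_0^1\theta^h(1-\theta)^{n-h}\,dF(\theta)$ with $F$ the $Beta(w/s,b/s)$ distribution — this being exactly the mixed Bernoulli moment evaluated through the Beta integral — establishes that the de Finetti measure is $Beta(w_i(0)/s_i,b_i(0)/s_i)$, as claimed.

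For part (2) I would argue that $Z_i(t)$ is a bounded martingale with respect to its natural filtration $\mathcal{F}_t$. Since the total number of balls $W_i(t)+B_i(t)=w+b+ts$ is deterministic, a one-step computation using the evolution rule gives $E[W_i(t+1)\mid\mathcal{F}_t]=W_i(t)\,(w+b+(t+1)s)/(w+b+ts)$, whence $E[Z_i(t+1)\mid\mathcal{F}_t]=Z_i(t)$. As $Z_i(t)\in[0,1]$ is bounded, the martingale convergence theorem yields a limit $p_i$ with probability one. Finally I would identify $p_i$ with the de Finetti parameter of part (1): conditionally on $\theta\sim Beta(w/s,b/s)$ the variables $X_i(t)$ are i.i.d.\ $Bern(\theta)$, so by the strong law of large numbers the empirical frequency of whites converges almost surely to $\theta$, and this frequency shares the limit of $Z_i(t)$; hence $p_i$ has the stated distribution.

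The routine bookkeeping with the Gamma functions in part (1) is the only mildly delicate step, but the genuine conceptual content is light: once the order-independence of the finite-dimensional law is noticed, both the exchangeability and the Beta de Finetti measure are forced, while the convergence in part (2) is a textbook consequence of boundedness together with the martingale property.
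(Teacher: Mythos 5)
Your proof is correct and follows essentially the same route as the paper's: the Gamma-function computation of the finite-dimensional law together with de Finetti's theorem for part (1), and the bounded-martingale argument plus the law of large numbers for part (2). The only minor difference is that you identify the limit almost surely with the de Finetti parameter $\theta$ via the conditional strong law, whereas the paper identifies the limit only through convergence of the distribution functions of $Z_i(t)$ to the Beta law; your version is, if anything, slightly cleaner on this point.
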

The proof of this proposition is very well-known and we refer to \cite{Johnson} for a complete demonstration. Here below we only sketch the basic arguments.
\begin{proof}\\
First remember that $i=1,...,k$ are the different groups.\\
\textit{Point 1}: let $1\leq r \leq t$ and $(a_1,...,a_t)$ such that $a_l\in \{0,1\}$ and $\sum_{l=1}^t a_l=r$. Then
\begin{equation}
\begin{split}
P(X_i(1)=a_1,...,X_i(t)=a_t)&=\frac{\Gamma\left(\frac{w_i(0)}{s_1}+\frac{b_i(0)}{s_1}\right)}{\Gamma\left(\frac{w_i(0)}{s_i}\right)\Gamma\left(\frac{b_i(0)}{s_i}\right)}\frac{\Gamma\left(\frac{w_i(0)+r}{s_i}\right)\Gamma\left(\frac{b_i(0)+t-r}{s_i}\right)}{\Gamma\left(\frac{w_i(0)}{s_i}+\frac{b_i(0)}{s_i}+t\right)} \\&=\int_0^1\theta^r(1-\theta)^{t-r}\frac{\Gamma\left(\frac{w_i(0)}{s_1}+\frac{b_i(0)}{s_1}\right)}{\Gamma\left(\frac{w_i(0)}{s_i}\right)\Gamma\left(\frac{b_i(0)}{s_i}\right)}\theta^{\frac{w_i(0)}{s_i}-1}(1-\theta)^{\frac{b_i(0)}{s_i}-1}d\theta .
\end{split}
\end{equation}
For de Finetti's representation theorem this proves that the sequence is exchangeable. Furthermore, the unicity of the representation implies that the de Finetti measure of the sequence $\{X_i(t)\}$ is a $Beta\left(\frac{w_i(0)}{s_i},\frac{b_i(0)}{s_i}\right)$.\\
\textit{Point 2}: first we notice that $\{Z_i(t)\}$ is a bounded martingale, i.e.
\begin{equation}
\begin{split}
E[Z_i(t+1)|Z_i(1),...,Z_i(t)]&=\frac{W_i(t)+s_i}{W_i(t)+B_i(t)+s_i}\frac{W_i(t)}{W_i(t)+B_i(t)}\\ &+\frac{W_i(t)}{W_i(t)+B_i(t)+s_i}\frac{B_i(t)}{W_i(t)+B_i(t)}\\ &=\frac{W_i(t)}{W_i(t)+B_i(t)}=Z_i(t)
\end{split}
\end{equation}
Hence, for Doob's convergence theorem, the sequence $\{Z_i(t)\}$ converges almost surely to a random limit $Z_i(\infty)$.\\
From Point 1 and the law of large numbers we know that, for $t$ growing to infinity, the distribution of $t^{-1}\sum_{l=1}^t X_i(l)$ converges to a $Beta\left(\frac{w_i(0)}{s_i},\frac{b_i(0)}{s_i}\right)$. \\
For every $t\geq 1$
\begin{equation}
Z_i(t)=\frac{w_i(0)+s_i\sum_{l=1}^t X_i(t)}{w_i(0)+b_i(0)+ts_i},
\end{equation}
and for every $z\in[0,1]$
\begin{equation}
P(Z_i(t) \leq z)=P\left(t^{-1}\sum_{l=1}^t X_i(l)\leq z \left(\frac{w_i(0)}{ts_i}+\frac{b_i(0)}{ts_i}+1 \right)-\frac{w_i(0)}{ts_i}\right).
\end{equation}
Hence
\begin{equation}
\lim_{t \to \infty}P(Z_i(t) \leq z)=\int_0^z\frac{\Gamma\left(\frac{w_i(0)}{s_i}+\frac{b_i(0)}{s_i}\right)}{\Gamma\left(\frac{w_i(0)}{s_i}\right)\Gamma\left(\frac{b_i(0)}{s_i}\right)}\theta^{\frac{w_i(0)}{s_i}-1}(1-\theta)^{\frac{b_i(0)}{s_i}-1}d\theta
\end{equation}
\end{proof}
Without any loss of generality (it is just a rescaling), from now on we assume $w_i(0)\in[0,1]$ and $b_1(0)=1-w_i(0)$.\\ 
Thus, before observing $\{X_i(t)\}$ the probability of picking a white ball is equal to $Z_i(0)=w_i(0)$. Anyway, as soon as $n$ observation $X_i(1),...,X_i(t)$ are available, we update our beliefs about the probability of sampling white balls, i.e.
\begin{equation}\label{zn}
Z_i(t)=\frac{w_i(0)+s_i\sum_{j=1}^nX_i(j)}{1+ns_i}.
\end{equation}
Equation \ref{zn} shows why the Polya urn is one of the basic tools in Bayesian nonparametrics. In fact, Equation \ref{zn} is consistent with the Bayesian paradigm of prior specification, knowledge update thanks to empirical observations and posterior calculation. Our prior knowledge is given by the urn composition at time $0$. Then every time a white (or a black) ball is observed, our beliefs about the possibility of sampling white (or black) balls change and specifically increase, thanks to the reinforcement mechanism of the urn. Furthermore, given the initial composition and the updates, at every stage is possible to perform a prediction about the possibility of picking a given ball.\\
The informative contribution of every observation to the update process is given by the reinforcement quantity $s_i$, that is the number of balls added at every time step. It is clear that the relative contribution of an observation decreases with $n$, view that the more observation we have about process $\{X_i(t)\}$, the less we are ready to change our beliefs.\\
From Equation \ref{zn} and Proposition \ref{polyaseq} we also know that $Z_i(t)$ and $R_i(t)=t^{-1}\sum_{l=1}^tX_i(l)$, the rate of ones over $t$, have the same limit. Hence, for $t$ large enough, $R_i(t)$ is well approximated by the $Beta\left(\frac{w_i(0)}{s_i},\frac{b_i(0)}{s_i}\right) $ distribution.\\
Coming back to the construction of our model, and remembering the Polya urn scheme, let us now assume that, for every group $i$, $E[D_i(0)]=E[Z_i(0)]=w_i(0)$. In other words, we want to associate the probability of default to the sampling of white balls in urn $U_i$; one default in group $i$ corresponds to the extraction of one white ball from urn $U_i$. Moreover we make the hypothesis that $D_i(0)$ is distributed according to a $Beta(\frac{w_i(0)}{s_i},\frac{1-w_i(0)}{s_i})$. \\
Given $D_i(t)$, the default of element $j$ in group $i$ over the time horizon $t$ is distributed as a $Bern(D_i(t))$. If $n_i$ is the number of elements in group $i$, and for $j=1,...,n_i$, we let $\delta_i^{j}(t)$ represent the indicator function of the event ``default of element j in group i up to time t". In other words $\delta_i^{j}(t)=1$ if element j has defaulted at some point in the time interval [0,t] and $\delta_i^{j}(t)=0$ otherwise. We also let $\delta_i^1(t),\delta_i^2(t),...,\delta_i^{n_i}(t)$ be conditionally independent given $D_i(t)$. \\
Without any empirical observation and following our a priori knowledge, the probability of default at time $t$, $D_i(t)$,is clearly equal to $w_i(0)$. Anyway, it is very likely that, after a time $t$ has passed, we perfectly know how many elements of group $i$ have failed, thus we are ready to update our beliefs. In particular, following the Polya urn mechanism, we add $s_i$ white balls for every observed default and $s_i$ black balls for every surviving element. As a consequence of this $D_i(t)$ is distributed as a $Beta$ whose parameters are:
\begin{equation} \label{beta}
w^\ast_i(t)=\frac{w_i(0)+s_i\sum_{j=1}^{n_i}\delta_i^j(t)}{s_i}\; \; \text{and}\; \; b^\ast_i(t)=\frac{(1-w_i(0))+s_i\left(n_i-\sum_{j=1}^{n_i}\delta_i^j(t)\right)}{s_i}.
\end{equation} 
Moreover, thanks to Bayes theorem, it is straightforward to verify that the conditional distribution of $D_i(t)$ given $\delta_i^1(t),\delta_i^2(t),...,\delta_i^{n_i}(t)$ is a $Beta$ whose parameters are also expressed in Equation \ref{beta}, and that 
\begin{equation}\label{zeta}
E[D_i(t)|\delta_i^1(t),\delta_i^2(t),...,\delta_i^{n_i}(t)]=\frac{w_i(0)+s_i\sum_{j=1}^{n_i}\delta_i^j(t)}{1+s_in_i}.
\end{equation}
Thanks to this simple Polya-like urn scheme, we have thus modeled the idiosyncratic probability of default for every group. It is easy to see a clear relationship between the standard Polya urn and our adaptation to the idiosyncratic probability of default in group $i$, once we notice that, by forcing a little bit the notation, $X_i(t)=\sum_{j=1}^{n_i}\delta_i^j$.\\
Finally, it is important to notice that, being $w_i(0)$, $b_i(0)$ and $s_i$ generally different among groups, we are dealing with distinct $Beta$ distributions. Our use of the $Beta$ distributions for the idiosyncratic probabilities of default is consistent with several empirical and theoretical studies, as underlined in \cite{Amerio}.

\subsection{Modeling interaction: the systemic probability\newline
of default}\label{subsection3}

In our construction, the systemic probability of default accounts for the dependence among groups. In particular, once we know the idiosyncratic probabilities of default of the $k$ groups and we have ordered them from the most reliable to the least reliable one, we want the superior/best groups to a have a direct influence on the inferior/worst ones.\\
Let once again $D_1(t), D_2(t),..., D_k(t)$ be the idiosyncratic probabilities of default for the $k$ groups. It is evident that $D_i(t)\in(0,1)$. Now, define $D^{\ast }_i(t)$ as the total probability of default associated to group $i$ at time $t$, that's the ``sum" of the idiosyncratic and the systemic components. Having in mind the construction of \cite{Doksum} and \cite{Mulwal} for
neutral to the right processes, \cite{Ishwa} for stick-breaking priors, and \cite{Cirillo} for default models, we construct the probabilities of
failure of the $k$ groups as

\begin{gather} 
D^{\ast }_{1}(t)=D_{1}(t) \nonumber \\
D^{\ast }_{2}(t)=D^{\ast }_{1}(t)+(1-D^{\ast }_{1}(t))D_{2}(t)  \nonumber \\
\vdots \\ 
D^{\ast }_{k}(t)=D^{\ast }_{k-1}(t)+(1-D^{\ast }_{k-1}(t))D_{k}(t)=1-\prod_{i=1}^{k}\left( 1-D_{i}(t)\right) \nonumber.
\end{gather}

It is easy to verify that:

\begin{enumerate}
\item this construction respects all our assumptions, so that the better
groups of FS's show a lower probability of default;

\item the probabilities of default of the different groups are strictly
linked together by the means of the recursive scheme.
\end{enumerate}

So, thanks to this simple and rather intuitive iterative modeling, we obtain
the probabilities of default for the different risk groups and, for every
FS, we are able to say whether it is likely to fail.
\\ Please note that from now on we are omitting time $t$ not to perplex the notation; in other words, $D_i^\ast=D_i^\ast(t)$ and so on.

\begin{proposition} \label{nttr}\\
The process that governs the probabilities of failure $D^{\ast }_{i}$, $i=1,...,k$, is
a neutral to the right process.
\end{proposition}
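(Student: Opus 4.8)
The plan is to verify the defining property of a neutral to the right (NTR) process directly from the recursive construction. Following Doksum \cite{Doksum}, a random distribution function $F$ on a totally ordered index set is NTR whenever, for any finite increasing collection of indices $t_1 \prec t_2 \prec \cdots \prec t_m$, the normalized increments
\[
F(t_1),\quad \frac{F(t_2)-F(t_1)}{1-F(t_1)},\quad \ldots,\quad \frac{F(t_m)-F(t_{m-1})}{1-F(t_{m-1})}
\]
are mutually independent; equivalently, the survival function $1-F$ factorizes into a product of independent factors, so that $-\log(1-F)$ is a process with independent increments. First I would recast the family $\{D^\ast_i\}_{i=1}^k$ as a random distribution function indexed by the totally ordered groups $\gamma_1 \succsim \gamma_2 \succsim \cdots \succsim \gamma_k$. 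The recursion gives $D^\ast_i - D^\ast_{i-1} = (1-D^\ast_{i-1})D_i \ge 0$ with $D^\ast_i \in [0,1]$, so $i \mapsto D^\ast_i$ is non-decreasing and bounded, hence a legitimate (discrete) distribution function.

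The central computation is the normalized increment. From $D^\ast_i = D^\ast_{i-1} + (1-D^\ast_{i-1})D_i$ one reads off immediately
\[
\frac{D^\ast_i - D^\ast_{i-1}}{1-D^\ast_{i-1}} = D_i,
\]
so the normalized jumps of the $D^\ast$-process are exactly the idiosyncratic probabilities $D_i$. Iterating the recursion — or simply reading the closed form $D^\ast_k = 1-\prod_{i=1}^k(1-D_i)$ already displayed above — yields the factorization $1-D^\ast_i = \prod_{j=1}^i (1-D_j)$, whence $-\log(1-D^\ast_i) = \sum_{j=1}^i \bigl(-\log(1-D_j)\bigr)$ is a partial sum.

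The remaining and genuinely substantive step is to establish that the normalized increments $D_1,\ldots,D_k$ are mutually independent. This is where the modeling hypotheses enter: each $D_i$ is the limiting white-ball proportion of its own Polya urn $U_i$, and the $k$ urns are sampled and reinforced separately, with no cross-group reinforcement in the idiosyncratic part of the model. Hence the $D_i$ (each a $Beta(w_i(0)/s_i,(1-w_i(0))/s_i)$ variable by Proposition \ref{polyaseq}) are independent, and therefore so are the transforms $-\log(1-D_i)$. The partial-sum representation then exhibits $-\log(1-D^\ast)$ as a process with independent increments, which is precisely Doksum's characterization, and I conclude that $\{D^\ast_i\}$ is NTR.

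The main obstacle I anticipate is not the algebra — the increment identity is immediate — but pinning down the independence of the idiosyncratic components cleanly and matching this discrete, finite construction to the (usually continuous-time) definition of an NTR process. As a useful byproduct worth recording, since the increments $D_i$ remain conditionally independent given the observed defaults $\{\delta_i^j\}$ (the urns being independent), the conjugacy of NTR processes ensures that the posterior process is again NTR, which is exactly what makes this recursive scheme suitable for Bayesian updating.
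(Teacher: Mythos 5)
Your proof is correct and takes essentially the same route as the paper: both identify the increments of the $D^{\ast}_i$ process with the idiosyncratic probabilities (the paper via $E_i = D^{\ast}_i - D^{\ast}_{i-1} \overset{d}{=} D_i\prod_{j=1}^{i-1}(1-D_j)$ and Doksum's Remark 3.1b, you via the normalized increments $\bigl(D^{\ast}_i - D^{\ast}_{i-1}\bigr)/\bigl(1-D^{\ast}_{i-1}\bigr) = D_i$) and then invoke the independence of $D_1,\ldots,D_k$ to conclude. Your version merely makes explicit two points the paper leaves implicit — that the independence of the $D_i$ comes from the urns being sampled and reinforced separately, and that the finite ordered index set plays the role of the $t_1 < \cdots < t_h$ in Doksum's definition — which is a harmless (indeed slightly more careful) presentation of the same argument.
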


\begin{proof}\\
First of all let $E_{i}=D^{\ast }_{i}-D^{\ast }_{i-1}$ with $i=1,...,k$ and $E_{1}=D^{\ast }_{1}=D_1$.
It is easy to verify that%
\begin{equation}
\left( E_{1},E_{2},...,E_{k}\right) \overset{d}{=}\left( D_{1},D_{2}\left( 1-D_{1}\right) ,...,D_{k}\prod_{i=1}^{k-1}\left( 1-D_{i}\right) \right) .  \label{equa}
\end{equation}%
This, as shown in remark $3.1b$ in Doksum \cite{Doksum}, assures that the process
governing $\left( D^{\ast }_{1},D^{\ast }_{2},...,D^{\ast }_{k}\right) $ is neutral to the right.
\end{proof}

Neutral to the right processes have been introduced by \cite{Doksum} and are widely used in Bayesian nonparametrics for survival analysis. For a complete introduction to this type of processes, we refer to the original paper by Doksum \cite{Doksum} and to other more recent works like \cite{Mulwal} and \cite{Walmul}. Here it is sufficient to state the following definition.

\begin{definition}[\protect Doksum \cite{Doksum}]\\
The random distribution function $F$ is said to be neutral to the right if
for each $h>1$ and $t_{1}<...<t_{h}$, there exist nonnegative independent
random variables $V_{1},...,V_{h}$ such that 
\begin{equation*}
\left( F\left( t_{1}\right) ,F\left( t_{2}\right) ,...,F\left( t_{h}\right)
\right) =_{\mathcal{L} }\left( V_{1},1-\left( 1-V_{1}\right) \left(
1-V_{2}\right) ,...,1-\prod_{i=1}^{h}\left( 1-V_{i}\right) \right) .
\end{equation*}%
The equations 
\begin{equation*}
F\left( t_{j}\right) =1-\prod_{i=1}^{j}\left( 1-V_{i}\right) \text{ \ }%
j=1,...,h
\end{equation*}%
yield%
\begin{equation*}
F\left( t_{j}\right) -F\left( t_{j-1}\right) =V_{j}\prod_{i=1}^{j-1}\left(
1-V_{i}\right)
\end{equation*}%
and%
\begin{equation*}
V_{j}=\frac{\left( F\left( t_{j}\right) -F\left( t_{j-1}\right) \right) }{%
\left( 1-F\left( t_{j-1}\right) \right) }\text{ \ }j=1,...,h\text{ and\ }%
t_{0}=-\infty .
\end{equation*}%
Thus \textquotedblleft $F$ is neutral to the right\textquotedblright\ mainly
means that the normalized increments%
\begin{equation*}
F\left( t_{1}\right) ,\frac{\left( F\left( t_{2}\right) -F\left(
t_{1}\right) \right) }{\left( 1-F\left( t_{1}\right) \right) },...,\frac{%
\left( F\left( t_{h}\right) -F\left( t_{h-1}\right) \right) }{\left(
1-F\left( t_{h-1}\right) \right) }
\end{equation*}%
are independent for all the $t_{1}<...<t_{h}$.
\end{definition}

The fact that we have modeled the idiosyncratic probabilities of default by the means of Polya urns has an interesting consequence.

\begin{proposition} \label{betas}\\
In particular the process that governs the probabilities of failure $D^\ast_{i}$, $i=1,...,k$, is
a beta-Stacy process with parameters $(w^\ast_1,b^\ast_1;w^\ast_2,b^\ast_2;...;w^\ast_k,b^\ast_k)$, where $w^\ast_i=w^\ast_i(t)$ and $b^\ast_i=b^\ast_i(t)$ are defined as in equation \ref{beta}.
\end{proposition}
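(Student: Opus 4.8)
The plan is to show directly that the normalized increments of the neutral to the right process built above are independent Beta random variables with the stated parameters; by the very definition of the beta-Stacy process (a neutral to the right process whose normalized increments are independent and Beta distributed, as in \cite{Walmul}), this is exactly the assertion to be proved. The heavy lifting has already been done in Proposition \ref{nttr}, so what remains is to identify the increments and read off their law.

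First I would recall, from Proposition \ref{nttr} and the subsequent definition, that the objects governing a neutral to the right process are the normalized increments
\begin{equation*}
V_{i}=\frac{D^{\ast }_{i}-D^{\ast }_{i-1}}{1-D^{\ast }_{i-1}},\qquad i=1,\dots ,k,
\end{equation*}
with the convention $D^{\ast }_{0}=0$. Using the recursive construction $D^{\ast }_{i}=D^{\ast }_{i-1}+(1-D^{\ast }_{i-1})D_{i}$, the numerator equals $(1-D^{\ast }_{i-1})D_{i}$, so that $V_{i}=D_{i}$. Equivalently, one may read this off equation \ref{equa}, since $E_{i}=D_{i}\prod_{l=1}^{i-1}(1-D_{l})$ while $1-D^{\ast }_{i-1}=\prod_{l=1}^{i-1}(1-D_{l})$, and the product cancels.

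Second, I would invoke the Polya urn construction of Subsection \ref{subsection2}: each idiosyncratic probability $D_{i}$ is distributed as a $Beta(w^{\ast }_{i},b^{\ast }_{i})$, with $w^{\ast }_{i}$ and $b^{\ast }_{i}$ as in equation \ref{beta}. Since the $k$ urns $U_{1},\dots ,U_{k}$ are sampled independently of one another, the variables $D_{1},\dots ,D_{k}$ are mutually independent. Combining the two steps, the normalized increments $V_{i}=D_{i}$ are independent and Beta distributed with parameters $(w^{\ast }_{i},b^{\ast }_{i})$, which is precisely the characterization of a beta-Stacy process with parameters $(w^{\ast }_{1},b^{\ast }_{1};\dots ;w^{\ast }_{k},b^{\ast }_{k})$.

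The computation itself is light; the point requiring care is the alignment with the definition of the beta-Stacy process. I would make sure the parameterization used here matches that of \cite{Walmul}, and that the independence of the $D_{i}$ is genuinely justified by the independence of the underlying urns rather than merely assumed. If one wishes to view the beta-Stacy law as a process indexed by a continuous time parameter via its cumulative hazard, one would additionally check that the finite family of Beta increments is the restriction of such a process to the ordered grid $\gamma_{1}\succsim\cdots\succsim\gamma_{k}$; in the finite-group setting treated here, however, exhibiting the independent Beta normalized increments is sufficient.
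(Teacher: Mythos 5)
Your proof is correct and takes essentially the same route as the paper: the paper also starts from Proposition \ref{nttr} and equation \ref{equa} to identify the increments, uses the Polya-urn fact that the $D_i$ are independent $Beta(w^\ast_i,b^\ast_i)$ variables, and matches this against the Walker--Muliere characterization of the beta-Stacy process as a neutral to the right process whose underlying independent variables are Beta distributed. The only cosmetic difference is that the paper exhibits the unnormalized increments $E_i$ as conditionally beta-Stacy (scaled Beta) distributed in the Mihram--Hultquist sense, while you work with the normalized increments $V_i=D_i$ directly; the two formulations are equivalent.
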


\begin{proof}\\
Since the beta-Stacy process, as defined in \cite{Walmul}, is a special case
of neutral to the right process, when the independent variables are Beta
distributed, it is straightforward to prove the proposition. In fact, using
equation \ref{equa} we known that%
\begin{gather*}
E_{1}=D_{1}\\
E_{2}=D_{2}(1-D_{1}) \\
\vdots \\
E_{k}=D_{k}\prod_{i=1}^{k-1}\left( 1-D_{i}\right) .
\end{gather*}
An obvious consequence of this is that
\begin{eqnarray*}
E_{1} &\sim &\mathcal{BS}  (w^\ast _{1};b^\ast _{1};1) \\
E_{2}|E_{1} &\sim &\mathcal{BS}  (w^\ast_{2};b^\ast_{2};1-E_{1}) \\
&&\vdots \\
E_{k}|E_{k-1},...,E_{1} &\sim &\mathcal{BS} (w^\ast_{k};b^\ast_{k};1-\sum_{j=1}^{k-1}E_{j}),
\end{eqnarray*}%
where $\mathcal{BS}  (a;b;c)$ is the so-called beta-Stacy distribution introduced by \cite{Mihu}, whose density function is%
\begin{equation*}
\frac{1}{B(a,b)}x^{a-1}\frac{(c-x)^{b-1}}{c^{a+b-1}}I_{(0,c)}(x),
\end{equation*}%
with $B(a,b)$ representing the standard beta function.\\
Hence the final result immediately follows.
\end{proof}

We would like to stress that, while proposition \ref{betas} is strictly linked to the use of Polya urns and $Beta$ distributions to model the idiosyncratic probabilities of defaults, proposition \ref{nttr} holds in general. The only simple requirement is that the variables $D_i$ are i.i.d. and $D_i \in [0,1]$ for $i=1,...,k$. In other words, given the recursive construction, the urn chain, independently from the type of urn used, always generates neutral to the right processes. This result is quite useful in practice, since neutral to the right processes are conjugate, thus simplifying Bayesian prediction.

\begin{corollary}\\
If $b_i=\sum_{k>i}w_i$ (and $\sum_{i=1}^\infty w_i<\infty$) the beta-Stacy process that governs the probabilities of default simply becomes a Generalized Dirichlet distribution, $GD\left(w^\ast_1,b^\ast_1;w^\ast_2,b^\ast_2;...;w^\ast_k,b^\ast_k\right)$, as defined in \cite{Connomo}.
\end{corollary}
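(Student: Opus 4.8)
\emph{Proof proposal.} The plan is to read the statement through the stick-breaking representation already obtained in Proposition~\ref{betas} and then match it against the defining construction of the Connor--Mosimann law in \cite{Connomo}. By Proposition~\ref{betas} the increments satisfy $E_1=D_1$ and $E_i=D_i\prod_{j<i}(1-D_j)$ with the $D_i$ independent and $D_i\sim Beta(w^\ast_i,b^\ast_i)$. The first thing I would recall is that the Generalized Dirichlet distribution $GD(a_1,b_1;\dots;a_k,b_k)$ of \cite{Connomo} is characterised precisely by this construction: it is the joint law of the stick-breaking weights built from independent $Beta(a_i,b_i)$ variables. Hence, at the level of the finite-dimensional increment law, the beta-Stacy vector is already of Generalized Dirichlet type, and the real content of the corollary is to identify the parameter constraint under which the (a priori infinite, ordered) neutral-to-the-right process collapses to the single distribution $GD(w^\ast_1,b^\ast_1;\dots;w^\ast_k,b^\ast_k)$.

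Next I would impose the hypothesis, transcribed to the construction parameters as $b^\ast_i=\sum_{j>i}w^\ast_j$. Writing $\tau_i=\sum_{j>i}w^\ast_j$ for this tail mass, the condition reads $b^\ast_i=\tau_i$, so that $w^\ast_i+b^\ast_i=w^\ast_i+\tau_i=\tau_{i-1}=b^\ast_{i-1}$, i.e. the telescoping identity $b^\ast_i=w^\ast_{i+1}+b^\ast_{i+1}$. This is exactly the consistency relation linking consecutive Beta parameters in \cite{Connomo}. I would then substitute these parameters into the Connor--Mosimann density and verify that the Beta normalising constants cancel telescopically, leaving the stated $GD(w^\ast_1,b^\ast_1;\dots;w^\ast_k,b^\ast_k)$ density; this step is routine once the identity $w^\ast_i+b^\ast_i=b^\ast_{i-1}$ is in hand.

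Finally, the summability hypothesis $\sum_i w_i<\infty$ is what I would use to turn the countable family of weights into a bona fide random probability distribution: since $\sum_i E_i=1-\prod_i(1-D_i)$, I must show that $\prod_i(1-D_i)=0$ almost surely, equivalently $\sum_i D_i=\infty$ a.s., so that no mass escapes to infinity and the process genuinely reduces to the finite-parameter $GD$ law rather than to a defective one. I expect this passage to the limit to be the main obstacle: under $b^\ast_i=\tau_i$ with $\tau_i\downarrow 0$ both Beta parameters degenerate to $0$, a regime in which the crude bound $E[D_i]=w^\ast_i/\tau_{i-1}$ is not by itself decisive, so that a Borel--Cantelli or second-moment estimate on the tail weights is needed to force the product to vanish. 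Everything else---the density identification and the matching of parameters---is the telescoping verification described above.
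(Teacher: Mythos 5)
The paper's entire proof of this corollary is a one-line appeal to Theorem 4.1 of Muliere and Walker \cite{Mulwal}, so your self-contained verification is a genuinely different route, and a more informative one: you make explicit that the Connor--Mosimann law is, by definition, the joint law of stick-breaking weights built from independent Beta variables, so that $(E_1,\dots,E_k)$ is $GD(w^\ast_1,b^\ast_1;\dots;w^\ast_k,b^\ast_k)$ for \emph{any} parameter values, and that the tail-sum hypothesis enters only through the telescoping identity $b^\ast_{i-1}=w^\ast_i+b^\ast_i$, which is precisely the Connor--Mosimann condition under which the generalized Dirichlet collapses further to an ordinary Dirichlet law (equivalently, the beta-Stacy process to a Dirichlet process); this is exactly what the cited theorem encapsulates. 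Your steps identifying the stick-breaking representation, the parameter transcription, and the telescoping cancellation of the Beta normalizing constants are all sound.

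The one genuine flaw is in your final step, and it is a misdiagnosis rather than a missing idea. First, for the finite-$k$ statement actually made in the corollary no properness argument is required at all: the $GD$ law is a distribution on the simplex $\{p_i\geq 0,\ \sum_{i=1}^k p_i\leq 1\}$ and tolerates a positive remainder $\prod_{i=1}^k(1-D_i)$. Second, in the infinite reading that you adopt, the vanishing of the residual mass needs neither Borel--Cantelli nor a second-moment estimate. Writing, as you do, $\tau_i=\sum_{j>i}w^\ast_j$, the hypothesis gives $E[1-D_i]=b^\ast_i/(w^\ast_i+b^\ast_i)=\tau_i/\tau_{i-1}$, and since the $D_i$ are independent the expectation of the partial product factorizes and telescopes:
\begin{equation*}
E\Bigl[\prod_{i=1}^n (1-D_i)\Bigr]=\prod_{i=1}^n E[1-D_i]=\prod_{i=1}^n \frac{\tau_i}{\tau_{i-1}}=\frac{\tau_n}{\tau_0}\longrightarrow 0,
\end{equation*}
because $\sum_i w^\ast_i<\infty$ makes $\tau_0$ finite and $\tau_n\downarrow 0$. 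As the partial products are nonnegative and decreasing, monotone convergence then yields $\prod_{i=1}^\infty(1-D_i)=0$ almost surely. So the first-moment information \emph{is} decisive once you exploit independence of the whole product rather than bounding each $E[D_i]$ separately; with that one line inserted, your argument is complete.
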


\begin{proof}\\
The proof is an application of Theorem 4.1 in Muliere and Walker \cite{Mulwal}.
\end{proof}
This last corollary, whose conditions are easily fulfilled, is very useful in applications. In fact, if the beta-Stacy process degenerates to a Generalized Dirichlet distribution, its parameters can be easily estimated using several existing computational techniques, from the expectation-maximization algorithm to the generalized method of moments  and other more advanced tools (see for example \cite{}).\\
Now, let $F_{i}$, $i=1,...,k,$ be the number of failures in the $i-$th group with 
$n_{i}$ elements. The (marginal) probability of having $F_{i}=f_{i}$
failures in the $i-$th group, with $0\leq f_{i} \leq n_i$, is equal to
\begin{equation}
P\left[ F_{i}=f_{i}\right] =E\left[ \binom{n_{i}}{f_{i}}(D_{i}^\ast)^{f_{i}}(1-D^\ast_{i})^{n_{i}-f_{i}}\right] =E\left[ P\left[F_{i}=f_{i}|D^\ast_{i-1}\right] \right] .
\end{equation}%
Then, using standard combinatorial considerations, the joint defaults of the first two groups can be computed as
\begin{eqnarray}
P\left[ F_{1}=f_{1},...,F_{k}=f_{k}\right] &=&E\left[ \binom{n_{1}}{f_{1}}
(D_{1}^\ast)^{f_{1}}(1-D_{1}^\ast)^{n_{1}-f_{1}}\binom{n_{2}}{f_{2}}
(D_{2}^\ast)^{f_{2}}(1-D_{2}^\ast)^{n_{2}-f_{2}}\cdots \right.  \nonumber\\
&&\cdots \left. \binom{n_{k}}{f_{k}}(D_{k}^\ast)^{f_{k}}(1-D_{k}^\ast)^{n_{k}-f_{k}}
\right] 
\end{eqnarray}
Given these details we have that, at time $t$, the number of default in
the first group (the best one) follows a beta-binomial
distribution (see \cite{Black}), or
\begin{equation}\label{n1}
P[F_1(t)=f_1]=\binom{n_1(t)}{f_1}\frac{B(w^\ast_1(t),b^\ast_1(t))}{B(w_1(0),b_1(0))},
\end{equation}%
where $n_1(t)$ is the number of failing systems in group $1$ at time $t$. This results comes directly from the use of Polya urns. \\
At this point, we can obtain the number of defaults in the first two groups, that is
\begin{equation}\label{n2}
\begin{split}
&P[F_1(t)=f_1,F_2(t)=f_2]=E[P[F_1(t)=f_1,F_2(t)=f_2|D_1^\ast(t),D_2^\ast(t)]]\\
&=\binom{n_1(t)}{f_1}\binom{n_2(t)}{f_2}E[D_1^{\ast}(t)^{f_1}D_2^{\ast}(t)^{f_2}(1-D_1^{\ast}(t))^{n_1(t)-f_1}(1-D_2^{\ast}(t))^{n_2(t)-f_2}]\\
&=\binom{n_1(t)}{f_1}\binom{n_2(t)}{f_2}E\left[D_1(t)^{f_1}(1-D_1(t))^{n_1(t)-f_1}(1-D_1(t))^{n_2(t)-f_2}(1-D_2(t))^{n_2(t)-f_2}\right.\times \\
&\times \left.\left(\sum_{i=0}^{f_2}\binom{f_2}{i}D_1(t)^{i}D_2(t)^{n_2(t)-i}(1-D_1(t))^{n_2(t)-i} \right)\right]=\binom{n_1(t)}{f_1}\binom{n_2(t)}{f_2} \times \\
&\times \sum_{i=0}^{f_2}\binom{f_2}{i}\frac{B(w_1^\ast(t)+i,b_1^\ast(t)+n_2(t)-i)}{B(w_1(0),b_1(0))}\frac{B(w_2^\ast(t)-i,b_2^\ast(t))}{B(w_2(0),b_2(0))}.
\end{split}
\end{equation}
It should be now clear that the joint probability of the number of defaults in the $k$ groups can be obtained continuing the iterative construction of equations \ref{n1} and \ref{n2} and standard combinatorial techniques, the result being a combination of beta-binomial distributions.\\

\section{An application to credit risk modeling}\label{section4}

The way in which international rating companies such as Moody's, S\&P and Fitch deal with credit risk and firms' defaults is definitely similar to the framework of interacting failing systems we have introduced. Think for example of $N$ firms divided into $k$ homogenous groups, that are ordered according to their financial reliability. For these reasons, we here present a simulation exercise related to firms' defaults and credit risk.\\
In order to use our model in applications, we need at least to know the quantities $D_i(0)$ for $i=1,...,k$, that is the idiosyncratic probabilities of default for every group at time 0. In general this should not be a problem, since the model can be initialized with historical data. \\
Here we propose a first application of our model to firms' defaults. In particular we show a simulation exercise using fictitious data. The suitability of the urn chain model for this kind of phenomena is definitely supported by empirical evidence (see \cite{Allen}).\\
Imagine we have 290 firms divided into three groups of reliability: A, B, C. Group A contains the 20 best firms on the market, whose probability of default is very low. Group C is made up of the riskiest firms and has 180 elements. Group B is the intermediate one and contains 90 firms. Even though simplified, this framework correctly reproduces the firms' classification structure induced by rating companies such as Moody's and Standard \& Poor's.\\
To initialize the model, i.e. to define the values $D_i(0)$ for $i=A,B,C$, we can use different methods. A first possibility is given by historical data: looking at the time series of defaults, we can try to estimate the three idiosyncratic probabilities of default and then run the urn chain. Another possibility is fully Bayesian and it is related to the prior knowledge of the researcher. In fact it is always possible to incorporate one's beliefs in $D_i(0)$ and then use the urn scheme as updating mechanism (see \cite{Cifa}). Finally, a third way for setting the model up is given by the credit spread approach of \cite{Duffi}, which is the solution we adopt here.\\
Following \cite{Duffi}, we consider that the value of the whole probability of default $D_i^\ast(1)$ for period $[0,1]$ is indirectly quoted on the market as the average one-year credit spread $\gamma_i(1)$ of group $i$. In other words, $\gamma_i(1)$ measures the average riskiness of a member of group $i$ as the difference between the zero coupon bond of that member and the risk-free interest rate on the market. The values of $\gamma_i(1)$ can be quite easily found on the market and are surely available to practitioners. In particular, in the financial literature (see \cite{Amerio}) it is common to assume
\begin{equation}\label{pred}
E[D^\ast_i(t)]=1-\exp(-t\gamma_i(t)).
\end{equation}
Hence $E[D^\ast_i(1)]=1-\exp(-\gamma_i(1))$. \\
Since the quantity we are interested is $E[D_i(0)]=w_i(0)$, that is the idiosyncratic probability of default at time 0, we can try to obtain it from $D^\ast_i(1)$, using  a linear term structure for credit spreads commonly used by traders (see for example \cite{Dai}). In fact, while it is not a problem to obtain actual 1-year default probabilities, it can be harder to have estimates for shorter periods. In general, traders do assume that, once we have fixed a time horizon, the risk of having one or more defaults before the expiry linearly decreases.\\
The use of an underlying linear term structure can be also a good way for making predictions about the 1-year probability of default every time some new information is available about the numbers of defaults in the different groups. In what follows we split every year in twelve months, but the same reasoning is available for weeks and even days.\\
In our construction we have decided to order the groups of failing systems using their idiosyncratic probability of default as rating measure. Obviously there are many possibilities for ordering and they all depend on the amount of available information. In our case, assuming that $\gamma_A(1)=0.02$, $\gamma_B(1)=0.06$ and $\gamma_C(1)=0.09$ are the one-year credit spreads for the three groups, and looking at the idiosyncratic probability of default we have that
\begin{equation}
\begin{split}
&E[D^\ast_A(1)]=E[D_A(1)]=1-\exp(-\gamma_A(1))=0.0198\\
&E[D^\ast_B(1)]=E[D^\ast_A(1)+(1-D^\ast_A(1))D_B(1)]=1-\exp(-\gamma_B(1))=0.0582\\
&E[D^\ast_C(1)]=E[D^\ast_B(1)+(1-D^\ast_B(1))D_C(1)]=1-\exp(-\gamma_C(1))=0.0861.
\end{split}
\end{equation}
Thanks to equation 3.1 we know that, for group A - the best one, the total probability of default is equal to the idiosyncratic component. For the other two groups, on the contrary, the total probability of default also includes a systemic component that accounts for the dependence between groups. As a consequence of this we have that
\begin{equation}
E[D_B(1)|D^\ast_A(1)]=\frac{1-\exp(-\gamma_B(1))-D^\ast_A(1)}{1-D^\ast_A(1)},
\end{equation}
and
\begin{equation}
E[D_C(1)|D^\ast_B(1)]=\frac{1-\exp(-\gamma_C(1))-D^\ast_B(1)}{1-D^\ast_B(1)}.
\end{equation}
Now let us hypothesize that, according to the linear term structure, the credit spread decreases of 0.0005 points every month. This value is fictitious but it is not very far from what practitioners generally assume (see \cite{Duffi} and \cite{Schon}). In this way, if $\gamma_A(1)=0.02$ then $\gamma_A(6/12)=0.02+0.0005*6=0.0230$, $\gamma_A(3/12)=0.0245$ and so on. Hence we have $\gamma_A(0)=0.0260$, $\gamma_B(0)=0.0660$ and $\gamma_C(0)=0.0960$.\\
Trivially for periods less than one year we have that for group A (and the same holds for B and C) the probability of failing between time $0$ and $i/12$ is such that
\begin{equation}
E[D^\ast_{A}(i/12)]=1-\exp(-\frac{i}{12}\gamma_{A}(i/12)).
\end{equation}
Now let us assume that for every group, at the end of every month, we observe the number of defaults expressed in the tables \ref{sim1}, \ref{sim2} and \ref{sim3}. For example in the first month we have 0 defaults for group A, 3 in group B and 25 in C.\\
We finally need to define the values of reinforcement $s_i$ for $i=A, B, C$. For simplicity we assume that the reinforcement is always the same for all the groups and equal to $0.05$ and $0.01$. Obviously one can define diverse reinforcement rules for the different groups.\\
Since $s_i$ represents the size of information update generated by every observation, we understand that a greater value is equivalent to a considerable reinforcement, while $0.01$ corresponds to a weaker update. Probably, as we show at the end of this section, the choice of $s_i$ is one of the most  sensible features of our model: from one side, it allows the researcher to incorporate an eventual a priori knowledge about the impacts of defaults; from the other, different values can produce quite different results in estimation (see tables \ref{sim1}, \ref{sim2} and \ref{sim3}). A good idea could be to calibrate $s_i$ such that the variability of the reinforced credit spreads is as close as possible to the historical variability of the spreads quoted on the market.\\
Using equations \ref{beta}, \ref{zeta} and 3.1 is now possible to perform our simulation, obtaining the results of tables \ref{sim1}, \ref{sim2} and \ref{sim3} (notice that ``uc" stands for urn chain).

\begin{table}[htb]
\begin{center}
\caption{Simulation results for group A}\label{sim1}
\begin{tabular}{|| l |ccccccc||}
	\hline
Month $(i)$ & 0&1&2&3&4&5&6\\	
\hline
$\gamma(i/12)$ &0.0260&0.0255&0.0250&0.0245&0.0240&0.0235&0.0230\\
	\hline
Defaults & 0&0&1&0&0&0&2\\
	\hline
$E[D^\ast_{uc,0.05}(i/12)]$&0.0257&0.0128&0.0314&0.0161&0.0083&0.0042&0.0535\\
	\hline
$E[D^\ast_{uc,0.01}(i/12)]$&  0.0257&0.0214&0.0262&0.0220&0.0185&0.0155&0.0298\\
	\hline
\hline
Month $(i)$ & 7&8&9&10&11&12&\\	
\hline
$\gamma(i/12)$ &0.0225&0.0220&0.0215&0.0210&0.0205&0.0200&\\
	\hline
Defaults &1&0&0&0&0&0&\\
	\hline
$E[D^\ast_{uc,0.05}(i/12)]$&0.0559&0.0311&0.0173&0.0096&0.0053&0.0030&\\
	\hline
$E[D^\ast_{uc,0.01}(i/12)]$&0.0341&0.0294&0.0253&0.0218&0.0188&0.0162&\\
	\hline
\hline
\end{tabular}
\end{center}
\end{table}

\begin{table}[htb]
\begin{center}
\caption{Simulation results for group B}\label{sim2}
\begin{tabular}{|| l |ccccccc||}
	\hline
Month $(i)$ & 0&1&2&3&4&5&6\\
	\hline
$\gamma(i/12)$ &0.0660&0.0655&0.0650&0.0645&0.0640&0.0635&0.0630\\
	\hline
Defaults & 0&3&1&0&4&5&8\\
	\hline
$E[D^\ast_{uc,0.05}(i/12)]$ &0.0639&0.0468&0.0467&0.0190&0.0462&0.0605&0.1426\\
	\hline
$E[D^\ast_{uc,0.01}(i/12)]$&0.0639&0.0570&0.0503&0.0350&0.0466&0.0581&0.0974\\
	\hline \hline
Month $(i)$& 7&8&9&10&11&12 &\\
	\hline
$\gamma(i/12)$ &0.0625&0.0620&0.0615&0.0610&0.0605&0.0600&\\
	\hline
Defaults &9&5&5&4&0&2&\\
	\hline
$E[D^\ast_{uc,0.05}(i/12)]$ &0.1714&0.1212&0.1072&0.0921&0.0304&0.0408&\\
	\hline
$E[D^\ast_{uc,0.01}(i/12)]$&0.1253&0.1170&0.1135&0.1069&0.0773&0.0698&\\
	\hline \hline
\end{tabular}
\end{center}
\end{table}

\begin{table}[htb]
\begin{center}
\caption{Simulation results for group C}\label{sim3}
\begin{tabular}{|| l |ccccccc||}
	\hline
Month $(i)$ & 0&1&2&3&4&5&6\\
	\hline
$\gamma(i/12)$ & 0.0960&0.0955&0.0950&0.0945&0.0940&0.0935&0.0930\\
	\hline
Defaults & 0&25&19&9&14&10&24\\
	\hline
$E[D^\ast_{uc,0.05}(i/12)]$ & 0.0915&0.1688&0.1641&0.0911&0.1466&0.1460&0.3225\\
	\hline
$E[D^\ast_{uc,0.01}(i/12)]$ & 0.0915&0.1512&0.1583&0.1183&0.1417&0.1464&0.2458\\
	\hline \hline
Month $(i)$ &7&8&9&10&11&12& \\
	\hline
$\gamma(i/12)$ &0.0925&0.0920&0.0915&0.0910&0.0905&0.090&\\
	\hline
Defaults &15&14&9&9&9&7& \\
	\hline
$E[D^\ast_{uc,0.05}(i/12)]$ & 0.3321&0.3083&0.2763&0.2824&0.2764&0.3101& \\
	\hline
$E[D^\ast_{uc,0.01}(i/12)]$ &0.2789&0.2869&0.2805&0.2832&0.2782&0.2874&\\
	\hline
\hline
\end{tabular}
\end{center}
\end{table}

Every table is devoted to one of the three groups A, B, C. Each table contains the following information: the months from 0 (present time) to 12 (one year); the values of the credit spread at one year $\lambda_i$ in the different months according to the linear spread term structure; the number of defaults in the different months; the estimated probability of default according to our urn chain model with two different values for the reinforcement quantity $s_i$, that is $0.01$ and $0.05$. \\
As expected our model gives estimates of the probability of default that are clearly different form the basic predictor based on equation \ref{pred}. In particular, it is evident how the numbers of defaults in the different months have a clear impact on the probability of default.\\ 
This information about defaults would be probably neglected in a standard approach without reinforcement or, in the best case, all the update would be performed at the end of the 1-year period, 
when all defaults have happened, and only as a basis for the next period of interest, with a clear temporal delay. Unfortunately this eventual all-in-one update is not really useful at all, since it can make the probability of default increase for all the second year, even if during the second year no default actually happens. In other words, a practitioner could base his/her evaluations on the basis of out-of-date facts. Our model is instead continuously updated, always producing updated estimates. \\
For every group, the role of the urn reinforcement mechanism is really clear if we compare the expected probability of default with $s_i=0.05$ and $s_i=0.01$. The greater is the reinforcement the greater are the fluctuations of the expected probability of default after every update. For example $s_i=0.05$ seems to be a quite high value for the updating process: if we consider group A and periods $5$ and $6$, we see that with $s_A=0.05$ the probability of default jumps from $0.0042$ to $0.0535$, indicating an excessive sensitivity of the model to defaults. For $s_A=0.01$, on the contrary, the jump is more contained, from $0.0155$ to $0.0298$, suggesting a more plausible variation. \\
Comparing the three tables, we can finally notice that, thanks to the urn chain mechanism, every time the probability of default increases (decrease) in group A, the dependence structure makes the probabilities of default of the inferior groups increase too. In other words, avoiding the assumption of independent defaults (both within and between groups), we have tried to overcome one of the weakest points of standard credit risk models (see for example the CR+ model in \cite{Credi}).\\

\section{Conclusion}
In this paper we have shown a first (Bayesian) nonparametric model for studying failing systems. \\
In detail, we consider failing systems divided into homogeneous groups of different reliability and we assume that these groups can be ordered according to some sort of external information. The elements within each group are assumed to be exchangeable.\\
We hypothesize that the probability of default of every failing system is given by the sum of two different components: an idiosyncratic probability of default related to the group to which the FS belongs and a systemic probability of defaults that account for the dependence between groups. For the first probability we make use of Polya urns that allows for a Bayesian nonparametric modeling based on information update, while for the second probability we construct an urn chain whose structure is very close to the one used by \cite{Doksum} for constructing neutral to the right processes.\\
We have also proposed a possible application of our model; in particular, we show a simulation experiment related to credit risk modeling. At this point it could be worth to apply the model to actual problems and data and to compare it with some benchmark.\\
As far as the evolution of the model is concerned it could be interesting to substitute the simple Polya mechanism with a more advanced scheme. An idea could be to use reinforced urn processes (see \cite{Walmul}) to model the idiosyncratic probabilities of defaults and then to combine them using the same neutral to the right construction. We believe that the general process governing the probability of default would still be a beta-Stacy process, but further analysis is needed.\\
Another research line could be to introduce the possibility of transitions from one group to the other, in order to model down- and upgrading of firms in reliability classes. This could be done by introducing random reinforcement rules.\\
Finally, we would like to stress that the choice of Polya urns to model the idiosyncratic probability of default is only due to the desire of obtaining closed-form results for the number of defaults. In reality, several other urn schemes could be used, while maintaining the general neutral to the right structure of the model (see \cite{Cirillo} for more details).\\
\\
\textbf{Acknowledgements:} The authors would like to thank the editor and the referees for their valuable suggestions. This work has been partly supported by the Swiss National Science Foundation, to which the first two authors are particularly grateful.

\end{document}